\newtheorem{definition}{Definition}
\newtheorem{problem}{Problem}
\newtheorem{proposition}{Proposition}
\newtheorem{example}{Example}
\newtheorem{property}{Property}
\title{A Bio-Synthetic Modulator Model for Diffusion-based Molecular Communications}
\author{Hamidreza~Arjmandi$^*$, Arman~Ahmadzadeh$^{\dagger}$,  Robert~Schober$^{\dagger}$, Masoumeh~Nasiri~Kenari$^*$\\
$^*$ Sharif~University~of~Technology, $^{\dagger}$ Friedrich-Alexander~University~of~Erlangen-Nuremberg
\thanks{This paper is an extended version of a paper submitted to IEEE Globecom
2016.}
\vspace{-0.5cm}}
\begin{document}
\maketitle

\begin{abstract}
In diffusion-based molecular communication (DMC), one important functionality of a transmitter nano-machine is signal modulation. In particular, the transmitter has to be able to control the release of signaling molecules for modulation of the information bits. 
An important class of control mechanisms in natural cells for releasing molecules is based on ion channels which are pore-forming proteins across the cell membrane whose opening and closing may be controlled by a gating parameter. 
In this paper, a modulator for DMC based on ion channels is proposed which controls the rate at which molecules are released from the transmitter by modulating a gating parameter signal. Exploiting the capabilities of the proposed modulator, an on-off keying modulation scheme is introduced and the corresponding average modulated signal, i.e., the average release rate of the molecules from the transmitter, is derived in the Laplace domain. By making a simplifying assumption, a closed-form expression for the average modulated signal in the time domain is obtained which constitutes an upper bound on the total number of released molecules regardless of this assumption. 
The derived average modulated signal is compared to results obtained with a particle based simulator. The numerical results show that the derived upper bound is tight if the number of ion channels distributed across the transmitter (cell) membrane is small.  
\end{abstract}
\vspace{-0.3cm}
\section{Introduction}
Diffusion-based molecular communication (DMC) is a promising approach for communication among nano-machines. Thereby, the information bits are encoded (or modulated) in the concentration, type, or release time of the diffusing molecules \cite{Akyl2011}. DMC is prevalent in natural communication systems. Therefore, it is expected that the functionalities of the nano-machines required for synthetic DMC systems can be modeled and designed based on mechanisms already existing in biological systems \cite{NEH13, Farsad2014}. One important functionality required for DMC is signal modulation at the transmitter nano-machine. In particular, the transmitter has to be able to control the release of the signaling molecules for modulation of the information to be transmitted. 
    
In most of the existing molecular communication (MC) literature, the transmitter is modeled as an ideal point source which can release any desired number of molecules instantaneously at the beginning of a symbol interval \cite{R1,MMM14,Nakano2013}. The possibility of ``pulse shaping", i.e., the non-instantaneous but still deterministic release of signaling molecules, was considered in \cite{Garralda11a}. However, in a real system, the transmitter will be a biological or electronic nano-machine (e.g. a modified cell) with a size on the order of tens of nanometer to a few micrometer \cite{NEH13}, which has to generate the signaling molecules via some chemical process and control the release of these signaling molecules into the channel using e.g. electrical, chemical, or optical signals \cite{Alberts14}.
Due to the non-zero time constants and the inherent randomness of the molecule generation and release processes, the molecules will not enter the channel instantaneously and their number will not be deterministic.
In \cite{PA10}, the transmitter is modeled as a box in which the concentration of molecules can be controlled arbitrarily such that a desired concentration gradient between the inside and the outside of the transmitter is obtained. 
In \cite{Arjmandi2013}, the transmitter is assumed to have a storage of molecules and the release of the molecules is adjusted by the outlet size which can be controlled. The authors model the total number of molecules that leave the transmitter by a Poisson distribution. 
Moreover, the author in \cite{Chou2015} assumes the emission patterns of different symbols are generated by different chemical reactions.
However, the models in \cite{Arjmandi2013, PA10, Chou2015} do not include the physical characteristics of the mechanism controlling molecule release such as the transmitter geometry and forces caused by concentration gradients or pumping mechanisms. 

Inspired by nature, a bio-synthetic MC system which mimics the moth pheromone system, has been recently proposed in \cite{Olsson2015}. As part of this system, a micro-machined evaporator was developed to release pheromones. Thereby, temperature regulation controls the timely and precise release of the pheromones. In contrast, the main goal of this paper is to propose a modulator model for DMC by employing the control mechanisms of natural cells.
In cell biology \cite[Chapter 12]{Alberts14}, \cite[Chapter 3]{Blass2015}, different mechanisms for releasing molecules by transporting them across the cell membrane (lipid bilayer) are known. Thereby, two main classes of transport mechanisms maybe distinguished, namely ion channels 
and transporters (also known as "pumps"). Specifically, ion channels are pore-forming membrane proteins that may be voltage gated, ligand gated, mechanically gated, or temperature gated, i.e., the opening and closing of the ion channel may be controlled by an electrical potential, a chemical reaction with a ligand, a mechanical force, or distinct thermal thresholds. Voltage and ligand gating are the two most studied gating mechanisms in the literature \cite{Blass2015}, see Fig. \ref{ionchannelsfig2}. 
\begin{figure}
\begin{center}
\includegraphics[scale=0.2,angle=0]{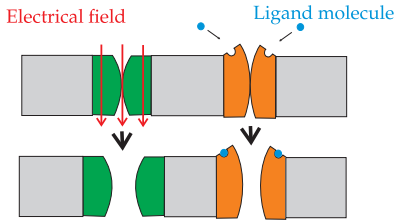}
\end{center}
\vspace{-0.5cm}
\caption{Schematic illustration of voltage and ligand gated ion channels.}
\label{ionchannelsfig2}
\vspace{-0.65cm}
\end{figure}

In this paper, a spherical synthesized cell is considered as transmitter\footnote{The terms "transmitter" and "cell" are used interchangeably in the remainder of the paper.} where the membrane is uniformly covered by voltage gated ion channels. Nevertheless, the obtained results can be easily extended to ligand gated ion channels. Exploiting the ion channels, we propose a modulator for DMC which we refer to as ion channel based bio-synthetic modulator (IBM). The IBM controls the rate at which molecules are released from the cell by modulating the gating parameter signal. Thereby, the release rate of the molecules from the cell constitutes the \textit{modulated signal}.   
To analyze and design IBMs, a simple time slotted on-off keying modulation scheme is considered where the on and off states are controlled by applying different voltage signals. More accurately, to transmit bits 0 and 1 in a given time slot, voltage signals are applied to the cell membrane such that the ion channels close and open, respectively.
The average release rate (the average modulated signal) is analyzed based on the diffusion equations inside and outside the transmitter and the boundary conditions at the transmitter membrane. To this end, an analytical expression for the Laplace transform of the average modulated signal is derived, which generally requires numerical inversion to obtain the time domain signal. The derived average modulated signal is compared to simulation results obtained with a particle based simulator (PBS) \cite{R1}. 
In addition, based on the  simplifying assumption of zero concentration outside the cell, a closed-form expression for the time domain average modulated signal is obtained which leads to an upper bound on the total number of released molecules (time integral of the average modulated signal) regardless of this assumption. Our numerical results show that the derived upper bound is tight if the number of ion channels is small.

\section{Ion channel based Bio-Synthetic Modulator}

  
In nature, voltage gated ion channels play a vital role by transporting inorganic ions across the cell membrane \cite{Alberts14}. 
Thereby, the ion channel proteins open and close randomly. The opening probability is related to conformational changes in the protein which depend on the gating parameter.
The simplest model for voltage gated ion channels is a two-state Markov model having an open and a closed state. Denoting the transition rates from the closed state to the open state (derivative of the transition probability with respect to time) by $\alpha_1(V(t))$ and $\alpha_2(V(t))$, the opening probability of the ion channel, $P_o(t)$, is characterized by the following differential equation \cite{Destex1994}
\begin{align}\label{ODEchannel}
\frac{dP_o(t)}{dt}=\alpha_1(V(t)) (1-P_o(t))-\alpha_2(V(t)) P_o(t),
\end{align}
where $V(t)$ denotes the electrical voltage applied to the voltage gated ion channel. Considering \eqref{ODEchannel}, the opening probability of the ion channel over the cell can be controlled by voltage signal, $V(t)$. On the other hand, the release rate of the molecules from the cell depends on the opening probability of the ion channel. Hence, the release rate of the molecules can be controlled by modulating the voltage signal. 

In the remainder of this section, the IBM transmitter model and a corresponding on-off keying modulation are introduced.

%
 \vspace{-0.3cm}

\subsection{IBM Transmitter Model}
The IBM transmitter is modeled as a spherical cell of radius $r_m$ covered by a membrane, see Fig. \ref{Transmitterfig}. The ion channel proteins are embedded in the cell membrane using biological engineering approaches. 
The specific type of ions that the mounted ion channels allow to pass and are used for signaling are referred to as type $A$ ions (molecules). Assume there are $N$ ion channels uniformly distributed over the cell membrane. 
The radius of the ion channels in the open and closed states is $r_c$ and zero, respectively. The membrane thickness is typically 7-9 nm which is negligible compared to typical cell radii which are on the order of hundreds of nanometer to a few micrometer. Therefore, we divide the space into an intracellular and an extracellar environment which are separated by an infinitely thin membrane. The diffusion constant of the $A$ ions is equal to $D_1$ and $D_2$ inside and outside the cell, respectively.  

For generation of the ions inside the cell, we adopt a simple model as our main focus is the modulation of the ion channels. In particular, we assume that the $A$ ions are generated by a spherical organelle with radius $r_s$, where $r_s \ll r_m$, that is located at the center of the cell. The organelle consumes the energy available inside the cell, produces the signaling molecules by a chemical reaction, and releases the molecules at radius $r=r_s$. A simplified chemical reaction model is considered where the ions are produced at a constant rate of $\mathcal{S}$ molecule (mo)$/$(m$^2$s) and the production is stopped when the average concentration inside the cell reaches a given threshold, $\mathcal{T}$ mo{$/\text{m}^3$}, corresponding to the equilibrium state of the reaction. 
We note that because of their charge, the ions repel each other which results in a drift of the ions towards the boundary of the cell. Our PBS results suggest that this effect is negligible for typical cell sizes, see Section IV. Hence, electrical drift is not taken into account in the analysis presented in Section III.
\begin{figure}
\vspace{-0.2cm}
\begin{center}
\includegraphics[scale=0.32,angle=0]{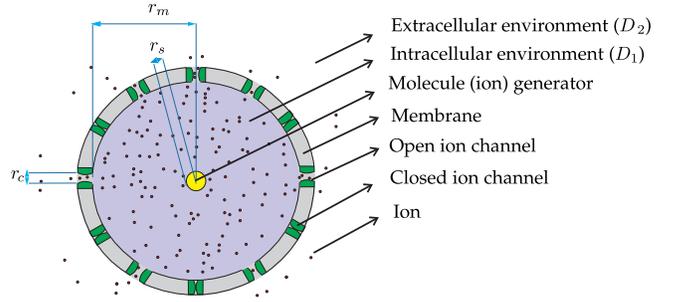}
\end{center}
\vspace{-0.5cm}
\caption{Schematic illustration of IBM transmitter.}
\label{Transmitterfig}
\vspace{-0.5cm}
\end{figure}
\vspace{-0.3cm}
\subsection{IBM Based On-off Keying Modulation}
A simple on-off keying modulation scheme employing the proposed  IBM is adopted in this paper. Assume time is divided into slots of length $T$. Bits 0 and 1 are represented by not releasing and releasing molecules at the transmitter, respectively. To implement this modulation scheme, the ion channel gating voltage, $V(t)$, is utilized to control the release of the produced molecules. For transmission of bit 0, the ion channels should be in the closed state with high probability (close to 1) or equivalently in the open state with small probability (close to 0) during the entire time slot. In contrast, for transmission of bit 1, the ion channels should be in the open state with high probability for duration $T_1$ where $0 < T_1 < T$, and in the closed state during the remainder of the time slot, i.e., $T_2=T-T_1$ seconds.  Equivalently, the desired opening probabilities to transmit bits 0 and 1 in time slot $[0,T]$ are $\mathcal{P}_o^0(t)=0$ and $\mathcal{P}_o^1(t)=u(t)-u(t-T_1)$, respectively, where $u(t)$ is the unit step function. 

We note that molecules are not released during $[T_1,T]$ for transmission of bit 1 for two reasons. First, since the molecules inside the transmitter are released during $[0,T_1]$, the time interval $[T_1,T]$ is needed to replenish the transmitter with molecules in preparation for the next symbol interval. Thereby, we assume $T_2$ is sufficiently large such that the concentration of the molecules inside the cell reaches $\mathcal{T}$ during $[T_1,T]$. Second, not releasing molecules in the diffusion channel during the last $T_2$ seconds of the current symbol interval allows cleansing of the channel from the previous released molecules, if $T_2$ exceeds the length of the diffusion channel memory. Hence, intersymbol interference (ISI) is avoided at the cost of a reduced symbol rate. 

To control the ion channel opening probability, we have to examine how the ion channel opening probability changes with the gating parameter. 
For voltage gated ion channels, the opening probability is given by \eqref{ODEchannel}.      
Assume the initial opening probability at $t=0$ is equal to $P_o^0$ and a constant voltage signal $V_0$ is applied to the membrane starting from $t=0$, i.e., $V(t)=V_{0} u(t)$. Solving differential equation (\ref{ODEchannel}) for this case yields
\vspace{-0.2cm}
 \begin{align}\label{openprob}
P_o(t)=(P_o^{\infty}(V_0)+K(V_0) e^{-t/{t_c(V_0)}})u(t),
\end{align}
where $P_o^{\infty}(V_0)=\frac{\alpha_1(V_{0})}{\alpha_1(V_{0})+\alpha_2(V_{0})}$, $K(V_0)=P_o^0-P_o^{\infty}(V_0)$, and $t_c(V_0)=\frac{1}{\alpha_1(V_{0})+\alpha_2(V_{0})}$. The transition rates $\alpha_1(V_{0})$ and $\alpha_2(V_{0})$ are non-negetive functions of $V_0$ which are obtained by experimental methods, see Example \ref{example1} below. 
From \eqref{openprob}, it is observed that the opening probability approaches the final value of $P_o^{\infty}(V_0)$ exponentially fast with time constant $t_c(V_0)$. Exploiting this simple exponential behavior, we can find voltage values such that the final opening probability approaches zero and one, respectively. In other words, two voltage values denoted by $V_{\rm off}$ and $V_{\rm on}$ can be found such that $P_o^{\infty}(V_{\rm off})\simeq 0$ and $P_o^{\infty}(V_{\rm on})\simeq 1$. Therefore, if $t_c(V_{\rm off})$ and $t_c(V_{\rm on})$ are sufficiently small, applying voltage signals of the form $\mathcal{V}^0(t)=V_{\rm off} (u(t)-u(t-T))$ and $\mathcal{V}^1(t)=V_{\rm on} (u(t)-u(t-T_1))+V_{\rm off}(u(t-T_1)-u(t-T))$ results in opening probabilities close to the desired opening probabilities in time slot $[0,T]$, i.e., $\mathcal{P}_o^0(t)=0$ and $\mathcal{P}_o^1(t)=u(t)-u(t-T_1)$, respectively. 
\begin{example}\label{example1}
Assume a Potassium ion channel with a single gate which has the following transition rates \cite{Hod1952}\footnote{The  model proposed for the Pottassium ion channel in \cite{Hod1952} has several independent gates and each one of them has two states of open and closed. For simplicity, here we assume a Pottassium ion channel with a single gate.};
\vspace{-0.2cm}
\begin{align}
\alpha_1(V(t))&=\frac{0.01 (V(t)+10)}{\exp{\left(\frac{V(t)+10}{10}\right)}-1}\\
\alpha_2(V(t))&=0.125 \exp{\left(\frac{V(t)}{80}\right)},
\end{align}
where the transition rate and the voltage are given in 1$/$ms and mv (millivolts), respectively.

In Fig. \ref{fig0_Opening_prob}, we have plotted the opening probability in response to the applied voltage signal $\mathcal{V}^1(t)$ for different values of $V_{\rm on}=25, 50, 200, -200$ mv and $V_{\rm off}=-200$ mv where $T_1=20$ ms and $T=40$ ms. The adopted value of $V_{\rm off}=-200$ mv, ensures practically zero opening probability for the ion channels whereas the corresponding time constant is $t_c(-200)=0.63$ ms. Numerical inspection of $P_o^{\infty}(V_0)$ shows that it is an increasing function of $V_0$ and for voltages smaller than $200$ mv the opening probability does not approach 1, i.e., only a fraction of the ion channels are open. Also, for voltages smaller than $-50$ mv the opening probability approaches zero. Furthermore, time constant $t_c(V_0)$, which determines the rate at which the opening probability approaches its final value, has a bell shape with a maximum of about $6$ ms for $V_0$ around $-20$ mv and for voltages with magnitude higher than $200$ mv it is smaller than $0.7$ ms. 
The proposed on-off modulation format can be implemented by using voltage signals $\mathcal{V}^1(t)=200 (u(t)-u(t-T_1))-200(u(t-T_1)-u(t-T))$ and $\mathcal{V}^0(t)=-200 (u(t)-u(t-T))$ for bits 1 and 0, respectively. From Fig. \ref{fig0_Opening_prob}, we observe that the resulting opening probability signal is not a perfect rectangle. The deviation arises because of the non-zero time constant, $t_c$, characterizing the exponential behavior of the opening probability. However, this effect is negligible compared to the time constants arising in the molecule release from the transmitter as will be confirmed by numerical results in Section IV. 
\end{example}
\vspace{-0.2cm} 
%
\begin{figure}
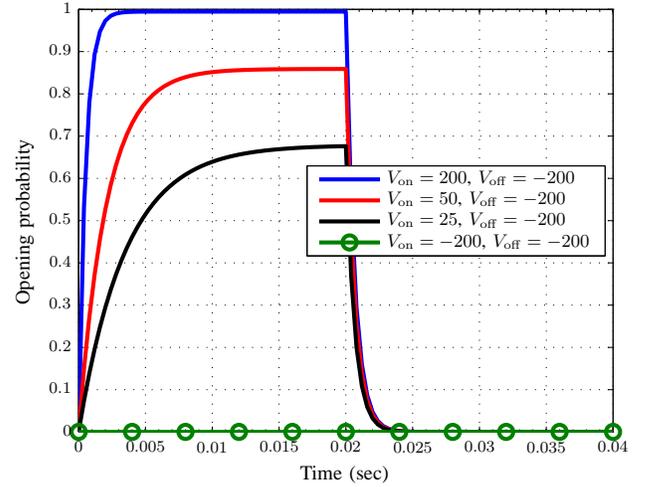

\vspace{-0.7cm}
  \centering
\resizebox{1\linewidth}{!}{\psfragfig{Fig/POsignal/POsig}} 
\vspace{-1cm}
\caption{Opening probability for voltage $\mathcal{V}^1(t)$ for different $V_{\rm on}$ values and $V_{\rm off}=-200$ mv.}\vspace{-0.5cm}
\label{fig0_Opening_prob}
\end{figure}
To analyze the performance of the considered modulation scheme, the release rate of the molecules from the cell given a desired opening probability signal (modulated signal) has to be studied. 
Given the desired opening probability for bit 1, $\mathcal{P}_o^1(t)=u(t)-u(t-T_1)$ or equivalently a corresponding voltage signal $\mathcal{V}^1(t)$, all ion channels are open during $[0,T_1]$. Therefore, the molecules produced inside the cell gradually move out  because of the force caused by the concentration gradient between the inside and outside the cell. Thereby, the molecules inside the cell are not released instantaneously, i.e., the release rate of the molecules (modulated signal) is not a Dirac delta function. In the next section, we derive the average release rate of the molecules (i.e., the average modulated signal) for the considered IBM.      

         \vspace{-0.5cm}

\section{Average modulated signal of IBM }
In this section. the average modulated signal for IBM based on-off keying modulation is  derived from the corresponding diffusion problem. Thereby, an analytical expression for the Laplace transform of the average modulated signal is obtained. Furthermore, we derive a closed-form upper bound on the total number of released molecules. 
\vspace{-0.3cm}
\subsection{Diffusion Problem Formulation for IBM} 
Given $\mathcal{V}^1(t)=V_{\rm on} (u(t)-u(t-T_1))+V_{\rm off}(u(t-T_1)-u(t-T))$, ideally the opening probability of the ion channels is given by $P_o^{\infty}(V_{\rm on})=1$ in the interval $[0,T_1]$. We define the average permeability of the cell surface, $z$, as the ratio of the average area of the open ion channels to the membrane area which yields $z=\frac{P_o^{\infty}(V_{\rm on})N \pi r_c^2}{4\pi r_m^2}$. For sufficiently large numbers of ion channels, the membrane can be considered uniformly permeable, i.e., if an $A$ ion hits the membrane, it leaves the cell with probability $z$ and is reflected with probability $1-z$. 

The $A$ ion concentrations inside and outside the cell are denoted by $C_1(r,t), r \leq r_m$, and $C_2(r,t), r>r_m$, respectively. While the channels are open and the molecules move out, the molecule source is producing new molecules at constant rate $\mathcal{S} \delta(r-r_s)$ where $\delta(\cdot)$ denotes the Dirac delta function. 
Given the source $\mathcal{S} \delta(r-r_s)$ inside the cell, we can formulate the diffusion equation for the concentrations of molecules inside and outside the cell in the interval $[0,T_1]$ as follows \cite{Crank1955}
\vspace{-0.2 cm}
\begin{align}
&\frac{\partial C_1(r,t)}{\partial t}=D_1 \frac{1}{r^2} \frac{\partial}{\partial r} \left(\frac{r^2 \partial C_1(r,t)}{\partial r}\right)\label{diffeq1}\\
&\frac{\partial C_2(r,t)}{\partial t}=D_2 \frac{1}{r^2} \frac{\partial}{\partial r} \left(\frac{r^2 \partial C_2(r,t)}{\partial r}\right)\label{diffeq2}.  
\end{align} 
Because of the spatial symmetry of the problem only the derivatives with respect to $r$ appear in the diffusion equations \eqref{diffeq1} and \eqref{diffeq2}.     
The initial concentrations inside and outside the cell in the considered time slot are $\mathcal{T}$ and zero, respectively, i.e, $C_1(r,0)=\mathcal{T},\; r \leq r_m$, and $C_2(r,0)=0,\;  r > r_m$, respectively, regardless of whether the previously transmitted bit is 0 or 1. In fact, this is achieved by closing the ion channels during the last $T_2$ seconds of the preceding time slot. 

In addition to the initial concentrations, the boundary conditions over the membrane are needed to solve the above diffusion problem.
A general method for obtaining boundary conditions for diffusion problems has been reported in \cite{Naqvi1982, Chatur1983}. Furthermore, exploiting the methods provided in these papers, the authors in \cite{Koszto2001} derive the boundary conditions for a membrane for a one dimensional diffusion problem. In the following proposition, we extend the result in \cite{Koszto2001} to 3-dimensional diffusion over a spherical membrane  at $r=r_m$. 
\begin{proposition} The boundary conditions for the membrane of the considered IBM are given by
\begin{align}\label{BC1}
&D_1\frac{\partial C_1(r,t)}{\partial r}=D_2\frac{\partial C_2(r,t)}{\partial r},\quad r=r_m,\\
&D_1\frac{\partial C_1(r,t)}{\partial r}=\frac{z}{1-z}\sqrt {\frac{k_B \mathfrak{T}}{2\pi m}}(C_2(r,t)-C_1(r,t)), r=r_m,\nonumber
\end{align} 
where $k_B$ is the Boltzman constant, $\mathfrak{T}$ is the temperature, and $m$ is the mass of the $A$ molecule.
\end{proposition}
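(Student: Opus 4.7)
The plan is to adapt the boundary-condition derivation of \cite{Koszto2001} from the one-dimensional planar membrane treated there to the present three-dimensional case of a spherical membrane at $r=r_m$. Thanks to the spatial symmetry the problem is effectively one-dimensional in the radial coordinate, so I expect the two relations in \eqref{BC1} to decouple cleanly: the first is a conservation statement requiring no microscopic input, while the second is a radiation-type (Robin) relation linking the Fickian flux at the wall to the concentration jump $C_2(r_m,t)-C_1(r_m,t)$ and requires a short kinetic-theory argument for its prefactor.

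For the first equation in \eqref{BC1} I would invoke conservation of molecules on the infinitesimally thin membrane. Since the membrane has zero thickness it cannot store molecules, so for any surface patch the rate at which $A$ ions leave through the inner face must equal the rate at which they enter through the outer face. Writing Fick's first law on each side, $J_i = -D_i\,\partial C_i/\partial r$ (only the radial component survives by spherical symmetry), and equating $J_1$ and $J_2$ at $r=r_m$ gives exactly $D_1\,\partial_r C_1 = D_2\,\partial_r C_2$.

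For the second equation in \eqref{BC1} I would count molecular collisions with the membrane from each side using kinetic theory. Assuming a Maxwell--Boltzmann velocity distribution at temperature $\mathfrak{T}$, the mean thermal speed is $\bar v=\sqrt{8k_B\mathfrak{T}/(\pi m)}$ and the one-sided impingement rate on a unit surface element is $C\bar v/4 = C\sqrt{k_B\mathfrak{T}/(2\pi m)}$. A molecule reaching the membrane is transmitted with probability $z$ and reflected with probability $1-z$, where $z=P_o^{\infty}(V_{\rm on})N\pi r_c^2/(4\pi r_m^2)$ is the average permeability already introduced before the statement. Bookkeeping transmitted and reflected contributions on both faces and equating the resulting net transport to the Fickian flux $-D_1\,\partial_r C_1$ at $r=r_m$ produces a Robin relation in which the combinatorial factor $z/(1-z)$ arises from subtracting the reflected attempts from the total impingement rate, precisely as in the planar derivation of \cite{Koszto2001}.

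The main obstacle will be justifying that the microscopic flux balance of \cite{Koszto2001}, derived for a flat interface, transfers to the curved geometry without modification. My plan is to observe that the molecular mean free path of the $A$ ions is orders of magnitude smaller than the cell radius $r_m$, so on the length scale at which the kinetic-theory bookkeeping is performed each surface element of the spherical membrane can be treated as locally flat; the spherical shape enters only through the radial form of the Laplacian in \eqref{diffeq1}--\eqref{diffeq2} and not through the interfacial flux balance itself. Once this local reduction is in place, rewriting normal derivatives as radial derivatives at $r=r_m$ converts the planar result of \cite{Koszto2001} directly into the two equations in \eqref{BC1}.
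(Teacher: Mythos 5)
Your proposal reaches the right result and, like the paper, ultimately leans on the planar derivation of \cite{Koszto2001}; but the way you justify carrying that derivation over to the sphere is genuinely different from what the paper does, and the difference matters. The paper's appendix does not argue ``locally flat because the mean free path is small.'' Instead it identifies the precise hypothesis that the \cite{Koszto2001} argument needs --- that the joint distribution of position and \emph{normal} velocity at the wall splits into a dominant even part $C(\textbf{P},t)\,\mathcal{N}_{V_r}(v_r,k_B\mathfrak{T}/m)$ plus a small odd part --- and then verifies it for the radial direction by writing $V_r=V_x\cos\phi\sin\theta+V_y\sin\phi\sin\theta+V_z\cos\theta$, convolving the three independent projected component distributions, checking that the three Gaussian variances add back to $k_B\mathfrak{T}/m$, and showing that of the eight convolution terms one is the full Maxwellian, three are even but negligible, and four sum to an odd function. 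Once that structure is in hand, the 1D proof applies verbatim with $v_x\to v_r$ and both equations of \eqref{BC1} (including the $z/(1-z)$ factor and the flux-continuity relation) drop out together; your separate mass-conservation argument for the first equation is a harmless shortcut. What your route buys is brevity and physical transparency (the impingement rate $C\sqrt{k_B\mathfrak{T}/(2\pi m)}=C\bar v/4$ is exactly the prefactor, and the scale separation intuition is sound); what the paper's route buys is that the ``locally flat'' step --- the only nontrivial content of the proposition --- is actually proved rather than asserted. Two small cautions: for a Brownian particle in the high-friction regime the ``mean free path'' is not a cleanly defined quantity, so that heuristic would need rephrasing in terms of the velocity-relaxation scale; and the $z/(1-z)$ factor in \cite{Koszto2001} arises from imposing the partial-reflection condition on the odd part of the velocity distribution at the wall, not from a simple subtraction of reflected attempts, so your one-line gloss of its origin should be read as a pointer to that reference rather than a derivation.
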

\begin{proof}
Please refer to the Appendix.
\end{proof}
Moreover, a trivial boundary condition for the considered problem is  $C_2(\infty,t)=0$. Please note that we have not considered any boundary condition for $r=r_s$. In fact as $r_s\ll r_m$, the effect of the molecule generator volume can be neglected for the diffusion problem. 

Solving the considered diffusion problem becomes easier, if we can resolve the nonzero initial condition inside the cell. An initial concentration at radius $r_i$, $C_1(r_i,0)$, is equivalent to having an instantaneous molecule production source at time 0 at radius $r_i$, i.e, $C_1(r_i,0)\delta(r-r_i)\delta(t)$. Therefore, the constant initial concentration, $\mathcal{T}$, inside the cell, $r \leq r_m$, can be modeled as an instantaneous source with $\mathcal{T}\delta(t)$ for $r \leq r_m$ which can be combined with the molecule generator source $\mathcal{S} \delta(r-r_s)$. This yields the following composite source inside the cell:
\begin{align}\label{integ_source}
S(r,t)=\mathcal{S}\delta(r-r_s)+\mathcal{T}\delta(t) \quad r \leq r_m, \; 0\leq t \leq T_1,
\end{align}
and zero initial conditions. Given $C_2(r,t)$, the average diffusion flux at surface $r=r_m$ equals $-D_2 \frac{\partial C_2(r,t)}{\partial r}=-D_1 \frac{\partial C_1(r,t)}{\partial r},\;r=r_m$ \cite{Crank1955}. Therefore, the average release rate of the molecules from the cell across the entire membrane area, i.e., the average modulated signal is given by
\begin{align}\label{exitrate1}
w(t)=- 4\pi r_m^2 D_2 \frac{\partial C_2(r,t)}{\partial r} \quad r=r_m.
\end{align}
Summarizing the discussion above, the average modulated signal is the solution of the following problem.
\begin{problem}\hspace{-0.3cm}\footnote{We note that the author in \cite{Mild1972}, considers Problem \ref{problem1} for the special case of $\mathcal{S}=0$.}\label{problem1} Given the modulating voltage signal $\mathcal{V}^1(t)=V_{\rm on} (u(t)-u(t-T_1))+V_{\rm off}(u(t-T_1)-u(t-T))$, the average modulated signal, $w(t)$, in the interval $[0,T_1]$ is the solution of the differential equations
\begin{align} \label{problem}
&w(t)=- 4\pi r_m^2 D_2 \frac{\partial C_2(r,t)}{\partial r} \quad r=r_m,\\
 &\frac{\partial C_1(r,t)}{\partial t}=D_1 \frac{1}{r^2} \frac{\partial}{\partial r} \left(\frac{r^2 \partial C_1(r,t)}{\partial r}\right)+S(r,t) \label{problem1eq2}\quad r\leq r_m,\\
 &\frac{\partial C_2(r,t)}{\partial t}=D_2 \frac{1}{r^2} \frac{\partial}{\partial r} \left(\frac{r^2 \partial C_2(r,t)}{\partial r}\right) \quad r>r_m, \label{problem1eq3}
\end{align}  \normalsize
for boundary conditions \eqref{BC1} and $C_2(\infty,t)=0$, and zero initial conditions inside and outside the cell. Furthermore, $S(r,t)$ in \eqref{problem1eq2} is given by \eqref{integ_source}.
\end{problem}
Considering the transmitter as a system with input signal $S(r,t)$ and output signal $w(t)$, it is clear that the system is linear and time invariant. As a result, the average modulated signal, $w(t)$, can be expressed as  
\begin{align}\label{exitrate}
w(t)&=\int_{0}^{r_m} \int_{0}^{t} { S(r',t')w^\star(t-t'|r')dt'dr'}\nonumber \\
&= \int_{0}^{r_m} { S(r',t)\star w^\star(t|r')dr'},
\end{align}
where $\star$ denotes the convolution operation and $w^\star (t|r')$ denotes the system response to the impulse $\delta(r-r')\delta(t)$ and is referred to as modulator impulse response. 
\vspace{-0.3cm}
\subsection{Modulator Impulse Response}
 
The impulse response $w^\star (t|r')$ is the solution of Problem \ref{problem1} for $S(r,t)=\delta(r-r')\delta(t)$. Let us define the dimensionless radial coordinate, time, and diffusion coefficient as $\rho=r/r_m,\; 
\tau=D_1t/r_m^2,$ and $A=\frac{D_2}{D_1}$, respectively, and $h=\frac{r_m}{D_1}\frac{z}{1-z}\sqrt {\frac{k_B \mathfrak{T}}{2\pi m}}$, and $\rho'=r'/r_m$.
Employing a change of variables, namely
\begin{align}
U_1(\rho,\tau)&=\rho C_1\left(r_m\rho,\frac{r_m^2\tau}{D_1}\right),\;
U_2(\rho,\tau)=\rho C_2\left(r_m\rho,\frac{r_m^2\tau}{D_1}\right),\nonumber\\
\varphi^\star(\tau|\rho') &=\frac{r_m^2}{D_1}w^\star\left(\frac{r_m^2\tau}{D_1}|\rho'r_m\right),\nonumber
\end{align} 
Problem \ref{problem1} is transformed to the following problem which includes only diffusion equations with constant coefficients.
\begin{problem} \label{problem2}
The modulator impulse response in dimentionless variables $\tau$ and $\rho$, $\varphi^\star (\tau|\rho')$, is obtained from the following system of differential equations 
\begin{align}
&\varphi^\star(\tau|\rho')=-{4\pi r_m^3 A} \left(\frac{\partial U_2(\rho,\tau)}{\partial \rho}-U_2(\rho,\tau)\right), \; \rho=1 \label{WintermsU}, \\
&\frac{\partial U_1(\rho,\tau)}{\partial \tau}= \frac{\partial ^2 U_1(\rho,\tau)}{\partial \rho^2}+\frac{\rho'}{r_m}\delta(\rho-\rho')\delta(\tau),\; \rho\leq 1 \label{diffeq3}\\
&\frac{\partial U_2(\rho,\tau)}{\partial \tau}= A \frac{\partial ^2 U_2(\rho,\tau)}{\partial \rho^2}\quad \rho>1 \label{diffeq32}
\end{align}\normalsize
where the following boundary and initial conditions hold:
 \begin{align}
&\frac{\partial U_1(\rho,\tau)}{\partial \rho}-U_1(\rho,\tau)=A\left(\frac{\partial U_2(\rho,\tau)}{\partial \rho}-U_2(\rho,\tau)\right),\; \rho=1  \label{BC21}\\
&\frac{\partial U_1(\rho,\tau)}{\partial \rho}+(h -1)U_1(\rho,\tau)=h U_2(\rho,\tau),\qquad \rho=1 \label{BC22}\\
&U_2(\rho,\tau)=0, \qquad \rho \to \infty \label{BC23}\\
&U_1(\rho,\tau)=0, \qquad \rho \to 0 \label{BC24}\\
&U_1(\rho,0)=0, \qquad \rho\leq 1 \label{IC21}\\
&U_2(\rho,0)=0, \qquad \rho>1. \label{IC22}
\end{align}\normalsize
Here, (\ref{BC21}) and (\ref{BC22}) are the transformed boundary conditions in  (\ref{BC1}), (\ref{BC23}) was obtained by transforming $C_2(\infty,t)=0 $, (\ref{BC24}) holds since $U_1(\rho,\tau)=\rho C_1$, and (\ref{IC21}) and (\ref{IC22}) are the transforms of the zero initial conditions. 
\end{problem}
The Laplace transformation with respect to $\tau$ is employed to solve Problem \ref{problem2} \cite{Carslaw1959,Crank1955}. Taking the Laplace transform of \eqref{WintermsU}, \eqref{diffeq3}, and  \eqref{diffeq32} results in a system of ordinary differential equations which can be easily solved. This leads to  
\begin{align}\label{exitrateeq}
\overline{\varphi}^\star(s|\rho')={4\pi r_m^3 A}\left(1+\sqrt{\frac{s}{A}}\right) \overline{U}_2(\rho,s), \quad \rho=1,
\end{align}
where $\overline{(\cdot)}$ denotes the Laplace transform of $(\cdot)$ and $\overline{U}_2(1,s)$ is given by \eqref{upper_con} at the top of the next page.
\begin{table*}
\begin{align}\label{upper_con}
\overline{U}_2(1,s)=\frac{-h\rho'}{r_m}\frac{\sinh(\sqrt{s}\rho')}{(h-(h-1)A (1+\sqrt{s/A}))\sinh{(\sqrt{s})}- (h+A(1+\sqrt{s/A}))\sqrt{s} \cosh{(\sqrt{s})} }.
\end{align}
\vspace{-0.7cm}
\hrulefill
\end{table*}
In general, obtaining the inverse Laplace transform of ${\overline{\varphi}}\,^\star(s|\rho')$ in closed form does not seem possible by standard methods and numerical inversion is needed, see e.g. \cite{Mild1972}. 
However, adopting a simplifying assumption, a closed-form expressions for the average modulated signal can be obtained which leads to an upper bound on the total number of released molecules.

%
%
\vspace{-0.3cm}
\subsection{Upper Bound on Total Number of Released Molecules}\label{upperbound}
The average modulated signal is the solution of Problem \ref{problem1} and is given by \eqref{exitrate}. The average number of molecules that leave the cell until time $t$ is the integral over the average modulated signal with respect to $t$, i.e.,
\begin{align}\label{integral_w}
M(t)=\int_{0}^{t}{w(\beta) d\beta}.
\end{align}

To obtain an upper bound on $M(t)$, the concentration outside the cell is assumed to be zero at all times, i.e., $C_2(r,t)=0$, $r>r_m$, $t\in [0,T]$. The physical interpretation of zero concentration outside the cell is that no molecule outside the cell can come back inside the cell, i.e., the molecules that leave the cell disappear and never return. Therefore, the net number of molecules released from the cell under this assumption is higher than the actual number of released molecules. As a result, this assumption yields an upper bound on the average number of released molecules, $M(t)$. 

Given the assumption $C_2(r,t)=0$ or equivalently $U_2(\rho,\tau)=0$, Problem \ref{problem2} reduces to a well known heat conduction problem with a closed-form solution for $U_1(\rho,\tau)$ given by \cite[Page 237]{Carslaw1959}
\begin{align}
U_1(\rho,\tau)=\frac{2\rho'}{r_m} \sum_{n=1}^{\infty}{e^{-\gamma_n^2\tau}\sin(\gamma_n\rho)\sin(\gamma_n\rho')\frac{\gamma_n^2+(h-1)^2} {\gamma_n^2+h(h-1)}}\nonumber
\end{align} 
where $\mp\gamma_n, n=1,2,\cdots,$ are the roots of
\begin{align}\label{cotroots}
\gamma \cot(\gamma)+h-1=0.
\end{align} 

The modulator impulse response given the upper bound assumption is denoted by $w_u^\star(t|r')$ and given by $-4\pi r_m^2 D_1 \frac{\partial C_1(r,t)} {\partial r}$, $r=r_m$. In dimensionless variables $\tau$ and $\rho'$, $w_u^\star(t|r')$ is denoted by $\varphi_u^\star(\tau|\rho')$ and is equal to $-{4\pi r_m^3} (\frac{\partial U_1(\rho,\tau)}{\partial \rho}-U_1(\rho,\tau))$, i.e.,
\vspace{-0.5cm}
\begin{align}\label{upper_impulse}
\varphi_u^\star(\tau|\rho')= \sum_{n=1}^{\infty}{G_n \rho'e^{-\gamma_n^2\tau}\sin(\gamma_n\rho')},
\end{align} 
where 
\begin{align}
G_n&=-8\pi r_m^2(\gamma_n\cos(\gamma_n)-\sin(\gamma_n))\frac{\gamma_n^2+(h-1)^2} {\gamma_n^2+h(h-1)}\\
&=8\pi r_m^2h\sin(\gamma_n)\frac{\gamma_n^2+(h-1)^2} {\gamma_n^2+h(h-1)}.
\end{align} 
%
%
The corresponding average modulated signal in the interval $[0,T_1]$ is
\begin{align}
w_u(t)=\int_{0}^{r_m} \int_{0}^{t} { S(r',t')w_u^\star(t-t'|r')dr'dt'}.
\end{align}
As a result, $w_u(t)$ and the upper bound on $M(t)$, $M_u(t)=\int_{0}^{t}{w_u(\beta)d\beta}$, are given by the closed-form expressions in \eqref{upper_modulated} and \eqref{upper_integral} at the top of the next page.
\begin{table*}[t]
\begin{align} 
w_u(t)&=
\sum_{n=1}^{\infty} {G_n\left[\frac{\mathcal{S} r_s}{\gamma_n^2r_m}  \left(1-e^{-\frac{\gamma_n^2D_1t}{r_m^2}} \right)\sin\left(\frac{\gamma_n r_s}{r_m}\right) +{ \frac{\mathcal{T}hD_1}{\gamma_n^2r_m}} e^{-\frac{\gamma_n^2 D_1 t}{r_m^2}}\sin(\gamma_n)  \right]}\label{upper_modulated}\\
M_u(t)&=
\sum_{n=1}^{\infty} {G_n\left[\frac{\mathcal{S} r_s}{\gamma_n^2r_m} \frac{r_m^2}{D_1} \left(t-\frac{1}{\gamma_n^2}(1-e^{-\frac{\gamma_n^2D_1t}{r_m^2}}) \right)\sin\left(\frac{\gamma_n r_s}{r_m}\right) +{ \frac{\mathcal{T}hD_1}{\gamma_n^2r_m}} \frac{r_m^2}{D_1\gamma_n^2}(1-e^{-\frac{\gamma_n^2 D_1 t}{r_m^2}})\sin(\gamma_n)  \right]}\label{upper_integral}
\end{align}
\vspace{-0.7cm}
\hrulefill
\end{table*}

\vspace{-0.3cm}
  
\section{Numerical and simulation results}

For the numerical and simulation results, we consider a transmitter of radius $r_m=5$ $\mu$m, $r_s=0.5$ $\mu$m, and $N$ Potassium ($K^+$) ion channels with $r_c=1$ nm (modeled as described in Example \ref{example1}) distributed uniformly over the surface of the transmitter. The internal and external environments of the transmitter are assumed be water at temperature 27$^\circ C$ with diffusion coefficient $D_1=D_2=1.14\times 10^{-9}$ m$^3/$s for Potassium. For all results, we assume $V_{\rm on}=200$ mv and $V_{\rm off}=-200$ mv, and correspondingly, $P_o(V_{\rm on})=1$ and $P_o(V_{\rm off})=0$. The molecule generator source is assumed to have a generation rate of $\mathcal{S}=3\times10^{14}$ mo$/$(m$^2$s) and a stop concentration threshold of $\mathcal{T}=10^{18}$ mo$/$(m$^3$).

\begin{figure}
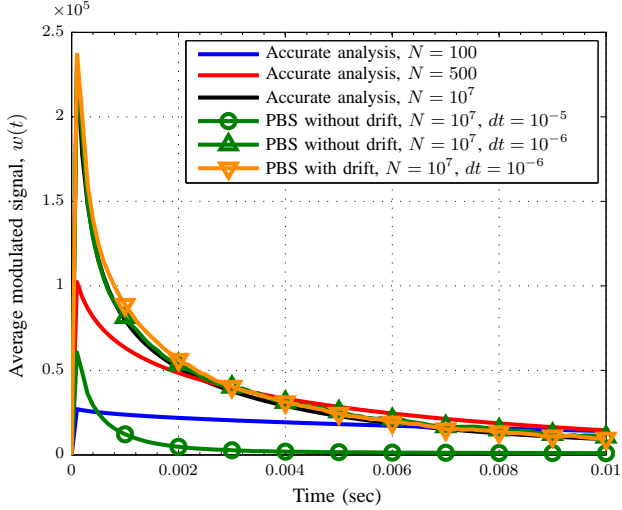

  \centering
\resizebox{1\linewidth}{!}{\psfragfig{Fig/PBSACC/PBSACC}} 
\vspace{-1cm}
\caption{Average modulated signal obtained from the proposed accurate analysis and a PBS.  \vspace{-2cm}}
\label{ACCPBS}
\end{figure}
\begin{figure}
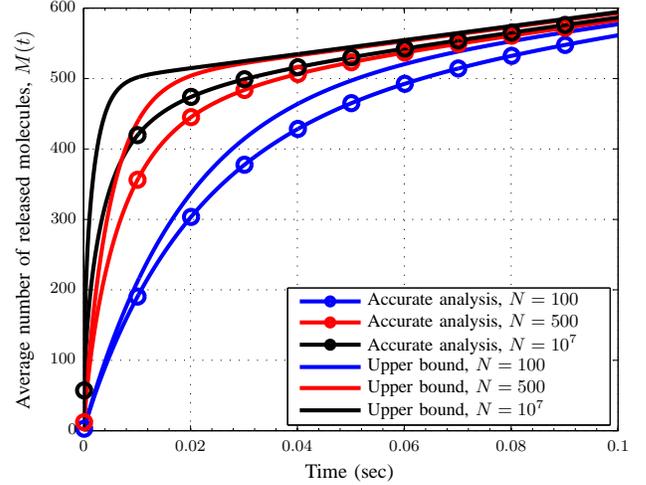

\vspace{-0.4cm}
  \centering
\resizebox{1\linewidth}{!}{\psfragfig{Fig/ACCUP/AccUpp}} 
\vspace{-1cm}
\caption{Comparison of accurate analysis of average number of released molecules with upper bound for different numbers of ion channels.}
\label{ACCUP}
\end{figure}
   
To confirm the analysis of the modulated signal, we use a PBS \cite{R1}.
In the PBS, the molecule locations are known and the molecules move independently in the 3-dimensional space. In each dimension, the displacement of a molecule in $dt$ seconds is modeled as a Gaussian random variable (rv) with zero mean and variance $2Ddt$. If a molecule hits the membrane, the outcome of a Bernoulli experiment with success probability $z$ is conducted and depending on the outcome of the experiment the molecule passes the membrane or is reflected. 

\vspace{-0.1cm}
In Fig. \ref{ACCPBS}, the average modulated signal obtained from the analysis, $w(t)$, is shown for different numbers of ion channels, i.e., $N=100, 500$, and $10^7$. In particular, to evaluate the modulator impulse response in \eqref{exitrate}, the inverse Laplace transform of ${\overline{\varphi}}\,^\star(s|r)$ in \eqref{exitrateeq} is computed numerically using the Talbot's method. Fig. \ref{ACCPBS} reveals that the molecules exit gradually from the cell. Thereby, the average modulated signal approaches to Dirac delta impulse as the number of ion channels increases. The average modulated signal obtained from the PBS is also depicted for $N=10^7$ and two different time steps for random walk of the molecules, i.e., $dt=10^{-5},$ and $dt=10^{-6}$. It is observed that by choosing a sufficiently small $dt$ (here $dt=10^{-6}$) the result obtained with the PBS approaches the analytical result. However, for $dt=10^{-5}$, the PBS result deviates from the analytical result. This can be explained as follows. The molecule movements have a continuous trajectory. Hence, in order to obtain a sufficient accurate trajectory in the simulation, a small time step, $dt$, is needed. Specifically, for membranes with smaller permeability $z$ (i.e., a smaller number of ion channels), smaller $dt$ values are needed to accurately track the molecule release rate.
Furthermore, the effect of electrical drift of the ions caused by their charge is investigated by the PBS for $N=10^7$. Thereby, Fig. \ref{ACCPBS} reveals that the effect of electrical drift is negligible for the considered typical cell size.


In Fig. \ref{ACCUP}, the average number of released molecules obtained from the accurate analysis, $M(t)$, is compared to the upper bound, $M_u(t)$, for different numbers of ion channels. It is observed that for smaller numbers of ion channels the upper bound is tighter. Since smaller numbers of ion channels result in smaller release rates of molecules, the concentration of the molecules outside the cell is smaller and the assumption made to arrive at the upper bound is more justified.

Note that for Figs. \ref{ACCPBS} and \ref{ACCUP}, we have assumed the ideal desired opening probability, $\mathcal{P}^1(t)$. However, the actual ion channel opening probability has a non-zero time constant as shown in Fig. \ref{fig0_Opening_prob}. For example, for the values of $V_{\rm on}=200$ mv and $V_{\rm off}=-200$ mv, the time constant of the opening probability is approximately $0.5$ ms and it takes $4\times 0.5$ ms to reach 95 percent of the final value. Hence, compared to the rise time of the accurate analysis curves for the molecule release in Fig. \ref{ACCUP}, the time constant of the opening probability is negligible, especially for smaller numbers of ion channels, and therefore the assumption is justified. 
\vspace{-0.5cm}
\section{Conclusions}
In this paper, an IBM for DMC was proposed. The IBM controls the rate at which molecules are released at the transmitter by modulating a gating parameter. Furthermore, a simple IBM based on-off keying modulation was analyzed and the Laplace transform of the average modulated signal was derived. Adopting the simplifying assumption of zero molecule concentration outside the cell, a closed-form expression for the modulated signal was presented and shown to correspond to an upper bound on the total number of released molecules. This bound is tight for a small number of ion channels.
Our results reveal that a real modulator releases molecules gradually and not instantaneously as is often assumed in the literature. 

In this paper, we only considered the average modulated signal, while the actual modulated signal displays a stochastic behavior because of the Brownian motion of the molecules and deviates from the average. Analyzing the impact of the stochastic behavior of the modulated signal on the performance of IBM based transmission is an interesting topic for future work.      
\vspace{-0.5cm}

\begin{appendix}
In this appendix, we present a proof for Proposition 1. Before starting the proof, we define the normalized distribution function of a free Brownian particle. Moreover, we provide an important property of this function which is utilized to prove the Proposition.   
\begin{definition} The normalized distribution function of a free Brownian particle in one-dimension is denoted by $f(V_x,X,t)$ and is defined as the probability that the particle at time $t$ is found at location $X$ with velocity $V_x$. $f(V_x,X,t)$ can be obtained from the Fokker-Planck equation \cite{Naqvi1982, Chatur1983}.
\end{definition}
\begin{property}
Function $f(V_x,X,t)$ can be written as a summation of an even function $f_0(V_x,X,t)$, and an odd function $f_1(V_x,X,t)$ with respect to $V_x$ as follows \cite{Naqvi1982},
\begin{align}\label{disfunc1}
f(V_x,X,t)=f_0(V_x,X,t)+f_1(V_x,X,t)
\end{align}
where
\begin{align}
&f_0(V_x=v_x,X=x,t)=C_X(x,t)\mathcal{N}_{V_x}(v_x,k_B\mathfrak{T}/m)\\
&f_1(V_x,X,t)=-f_1(-V_x,X,t),
\end{align}
where $k_B$ is the Boltzman constant, $\mathfrak{T}$ is the temperature, $m$ is the particle mass, $\mathcal{N}_{V_x}(v_x,k_B\mathfrak{T}/m)$ denotes the normal probability distribution function (pdf) of random variable $V_x$ with variance $k_B\mathfrak{T}/m$, i.e., $$\mathcal{N}_{V_x}(v_x,k_B\mathfrak{T}/m)=\frac{1}{\sqrt{2\pi k_B\mathfrak{T}/m}}\exp\left(\frac{-v_x^2}{2{k_B\mathfrak{T}/m}}\right),$$ and $C_X(x,t)$ denotes the pdf of random variable $X$ which is interpreted as the normalized concentration at location $x$ and time $t$. Note that $f_0(V_x,X,t)$ is the equilibrium distribution function of the particle and is the dominant contributing term in the distribution function, i.e., $f_0(v_x,x,t)\gg |f_1(v_x,x,t)|,$ $\forall$ $v_x,x,$ and $t$ \cite{Koszto2001}.
\end{property}
The authors in \cite{Koszto2001}, consider one-dimensional diffusion, where there is a permeable membrane at $X=0$ with permeability $z$. Exploiting Property 1, they show that the boundary conditions over the membrane are as follows\footnote{In \cite{Koszto2001}, the diffusion coefficients of the left and the right hand side of the membrane are assumed to be equal, $D_1=D_2$, but the presented proof holds for the general case of $D_1\neq D_2$.}:
\begin{align}
&D_1\frac{\partial C_1(x,t)}{\partial x}=D_2\frac{\partial C_2(x,t)}{\partial x},\quad x=0,\\
&D_1\frac{\partial C_1(x,t)}{\partial x}=\frac{z}{1-z}\sqrt {\frac{k_B \mathfrak{T}}{2\pi m}}(C_2(x,t)-C_1(x,t)), x=0,\nonumber
\end{align}
where $C_1(x,t)$ and $C_2(x,t)$ are the concentrations for $x\leq 0$ and $x > 0$, respectively. In our problem, we consider the Brownian motion in a 3-dimensional environment. The membrane with permeability $z$ is assumed to be a spherical surface at $R=r_m$ where $R$ is the radial coordinate of the spherical coordinate system. Assume a point $\textbf{P}=(x,y,z)$ in Cartesian coordinates or equivalently $\textbf{P}=(r,\theta,\phi)$ in spherical coordinates, where $\theta$ and $\phi$ are the polar and azimuth angles, respectively. The unit vector in radial direction at this point is given by 
\begin{align}
\hat{a}_r=\cos(\phi)\sin(\theta)\hat{a}_x+\sin(\phi)\sin(\theta)\hat{a}_y+\cos(\theta)\hat{a}_z,
\end{align} 
where $\hat{a}_x,\hat{a}_y,$ and $\hat{a}_z$ are unit vectors in directions $X,Y,$ and $Z$, respectively. Defining the velocity vector of the particle as $\textbf{V}=v_x\hat{a}_x+v_y\hat{a}_y+v_z\hat{a}_z$, the component of the velocity vector of the particle in direction $\hat{a}_r$ is defined as $v_r=\langle V, \hat{a}_r\rangle$, where $\langle \cdot,\cdot \rangle$ denotes the inner product operation. As a result, the radial component of the velocity of a particle at a given point $\textbf{P}$ is 
\begin{align}\label{vr1}
v_r=v_x\cos(\phi)\sin(\theta)+v_y\sin(\phi)\sin(\theta)+v_z\cos(\theta).
\end{align} 
To prove the proposition, we need to derive the joint distribution function of the location and the radial velocity of a free Brownian particle, i.e., $f(v_r,\textbf{P},t)$, since the boundary is at $R=r_m$. We show that $f(v_r,\textbf{P},t)$ can be approximated as the summation of an even function, $C(\textbf{P},t)\mathcal{N}_{V_r}(v_r,k_B\mathfrak{T}/m)$, and an odd function,  $g(v_r,\textbf{P},t)$, with respect to $v_r$ as follows: 
\begin{align}\label{disfunc21}
f(v_r,\textbf{P},t)=C(\textbf{P},t)\mathcal{N}_{V_r}(v_r,k_B\mathfrak{T}/m)+g(v_r,\textbf{P},t).
\end{align}
where $C(\textbf{P},t)$ denotes the pdf of location of the particle. Distribution function \eqref{disfunc21} can be interpreted as the distribution function in a one-dimensional environment as given by \eqref{disfunc1} but with respect to $r$. Thereby, the rest of the proof for deriving the boundary conditions for a 3-dimensional environment is similar to the proof in \cite{Koszto2001}. Therefore, in the remainder of the proof, we derive the distribution function $f(v_r,r,t)$ as given in \eqref{disfunc21}. 

The Brownian motion is independent in the coordinates $X, Y,$ and $Z$. Thereby, we can write the normalized distribution function for the location and velocity of the particle in $Y$ and $Z$ coordinates, respectively, also as in \eqref{disfunc1}. 
Considering \eqref{disfunc1}, the marginal pdfs of the velocity and the location in the $X$ coordinate can be obtained as
\begin{align}
C_X(X,t)=\int{f(V_x,X,t)dV_x},\\
\xi_{V_x}(V_x,t)=\int{f(V_x,X,t)dX},
\end{align}
where the indefinite integrals are over the entire location space and the entire velocity space, respectively. Given the location of the particle at $x$ and at time $t$, the conditional pdf of the velocity is \cite{Feller1968}
\begin{align}
\xi_{V_x}(V_x|X=x,t)=\frac{f(V_x,X=x,t)}{C_X(X=x,t)}.
\end{align}
Therefore, we have
\begin{align}\label{condpdf}
\xi_{V_x}(V_x=v_x|X=x,t)=\mathcal{N}_{V_x}(v_x,k_B\mathfrak{T}/m)+\frac{f_1(v_x,x,t)}{C_X(x,t)}.
\end{align}
Similarly, pdfs $\xi_{V_y}(V_y|Y,t)$ and $\xi_{V_z}(V_z|Z,t)$ can be obtained for the velocity components in the $Y$ and $Z$ coordinates, respectively.
The velocity component in radial direction at a given point $\textbf{P}$ is given by \eqref{vr1}.
Defining random variable ${V_x^r=V_x \cos(\phi)\sin(\theta)}$ and given the pdf $\xi_{V_x}(V_x|x,t)$, the pdf of random variable $V_x^r$ can be written as \cite{Feller1968}
\begin{align}
\xi_{V_x^r}(V_x^r|\textbf{P},t)&=\frac{1}{\cos(\phi)\sin(\theta)} \xi_{V_x}\left(\frac{V_x^r}{\cos(\phi)\sin(\theta)}|\textbf{P},t\right)\nonumber\\
&\overset{(a)}{=} \frac{1}{\cos(\phi)\sin(\theta)} \xi_{V_x}\left(\frac{V_x^r}{\cos(\phi)\sin(\theta)}|x,t\right)
\end{align}
where equality (a) holds since the Brownian motion in the $X$ coordinate is independent from the $Y$ and $Z$ coordinates, and as a result, $\xi_{V_x}(V_x|\textbf{P}=(x,y,z),t)=\xi_{V_x}(V_x|x,t)$.    
Considering $\xi_{V_x}(V_x|x,t)$ as given in \eqref{condpdf}, we have
\begin{align}\label{vx}
&\xi_{V_x^r}(V_x^r=v_x^r|\textbf{P},t)=
\mathcal{N}_{V_x^r}(v_x^r,(\cos(\phi)\sin(\theta))^2 k_B \mathfrak{T}/m)+\\
&\frac{1}{C_X(x,t)\cos(\phi)\sin(\theta)} f_1\left(\frac{v_x^r}{\cos(\phi)\sin(\theta)},x,t\right),\nonumber
\end{align}
where the first term is even and the second term is odd with respect to $v_x^r$.
The distributions $\xi_{V_y^r}(V_y^r|\textbf{P},t)$ and $\xi_{V_z^r}(V_z^r|\textbf{P},t)$ for $V_y^r$ and $V_z^r$ can be obtained in a similar manner as follows:
\begin{align}
&\xi_{V_y^r}(V_y^r=v_y^r|\textbf{P},t)=
\mathcal{N}_{v_y^r}(v_y^r,(\sin(\phi)\sin(\theta))^2 k_B \mathfrak{T}/m)\label{vy}\\
&+\frac{1}{C_Y(y,t)\sin(\phi)\sin(\theta)} f_1\left(\frac{v_y^r}{\sin(\phi)\sin(\theta)},y,t\right),\nonumber\label{vz}\\
&\xi_{V_z^r}(V_z^r=v_z^r|\textbf{P},t)=
\mathcal{N}_{V_z^r}(v_z^r,(\sin(\phi)\sin(\theta))^2 k_B \mathfrak{T}/m)\\
&+\frac{1}{C_Z(z,t)\cos(\theta)} f_1\left(\frac{v_z^r}{\cos(\theta)},z,t\right).\nonumber
\end{align}
For a given point $\textbf{P}$, the rvs $V_x^r$, $V_y^r$, and $V_z^r$ are independent from each other. Therefore, the distribution function of $V_r=V_x^r+V_y^r+V_z^r$, given point $\textbf{P}$ at time $t$, $\xi_{V_r}(V_r|\textbf{P},t)$, is given by the convolution of the distribution functions of $V_x^r,V_y^r$, and $V_z^r$ \cite{Feller1968}, i.e.,
\begin{align}\label{vr}
\xi_{V_r}(V_r=v_r|\textbf{P},t) = \xi_{V_x^r}(v_r|\textbf{P},t)  \star \xi_{V_y^r}(v_r|\textbf{P},t)\star \xi_{V_z^r}(v_r|\textbf{P},t).
\end{align}
Substituting \eqref{vx}, \eqref{vy}, and \eqref{vz} into \eqref{vr} results in the summation of 8 convolution terms. The first term is the convolution of 3 normal pdfs as follows: 
\begin{align}\label{convterm1}
\mathcal{N}_{V_x^r}(v_r,(\cos(\phi)\sin(\theta))^2 k_B \mathfrak{T}/m) \star\\
 \mathcal{N}_{V_y^r}(v_r,(\sin(\phi)\sin(\theta))^2 k_B \mathfrak{T}/m) \star\nonumber\\ \mathcal{N}_{V_z^r}(v_r,(\sin(\phi)\sin(\theta))^2 k_B \mathfrak{T}/m)\nonumber
\end{align}
which results in a normal pdf with a variance equal to the summation of the individual variances, i.e., 
\begin{align}
&(\cos(\phi)\sin(\theta))^2 k_B \mathfrak{T}/m+(\sin(\phi)\sin(\theta))^2 k_B \mathfrak{T}/m+\\
&\cos(\theta)^2 k_B \mathfrak{T}/m=k_B \mathfrak{T}/m.\nonumber
\end{align}
In other words, the summation of three independent normal random variables is a new normal random variable whose variance is the summation of the variances of the constituting rvs.
Therefore, the first term is $\mathcal{N}_{v_r}(k_B\mathfrak{T}/m)$ which is an even function.

There are 3 (of 8) other even functions which are convolutions of one normal pdf and two functions containing $f_1$\footnote{It is easy to show that the convolution of an odd function with one even function is odd and the convolution of two odd or two even function is even.} For example, one of these terms is 
\begin{align}\label{convterm2}
\mathcal{N}_{V_x^r}(v_r,(\cos(\phi)\sin(\theta))^2 k_B \mathfrak{T}/m)\star\\
 \frac{1}{C_Y(y,t)\sin(\phi)\sin(\theta)} f_1\left(\frac{v_r}{\sin(\phi)\sin(\theta)},y,t\right)\star \nonumber\\
  \frac{1}{C_Z(z,t)\cos(\theta)} f_1\left(\frac{v_r}{\cos(\theta)},z,t\right).\nonumber
\end{align}
Since $f_0(v_y,y,t)\gg |f_1(v_y,y,t)|$ and $f_0(v_z,z,t)\gg |f_1(v_z,z,t)|$, it is easy to see that the second and third terms in \eqref{convterm2} are much smaller than the second and third terms in \eqref{convterm1}, respectively. As a result, the term in \eqref{convterm2} is negligible in comparison to that in \eqref{convterm1}. Similarly, the other two even functions can be shown to be negligible in comparison to \eqref{convterm1}.  

Moreover, there are 3 (of 8) functions that are the convolution of two normal pdfs (even function) and one odd function and there is one (of 8) function that is the convolution of 3 odd functions. The summation of these 4 functions is odd. Therefore, we can write $\xi_{V_r}(V_r=v_r|\textbf{P},t)$ as summation of an even function $\mathcal{N}_{V_r}(v_r,k_B\mathfrak{T}/m)$ and an odd function. Thereby, the joint distribution function of the radial velocity, $V_r$, and the location $\textbf{P}$ of a free Brownian particle in a 3-dimensional environment is obtained as  
\begin{align}
f_{V_r}(v_r,\textbf{P},t)&=C(\textbf{P},t)\xi_{v_r}(v_r|\textbf{P},t)\\
&=C(\textbf{P},t)\mathcal{N}_{V_r}(v_r,k_B\mathfrak{T}/m)+g(v_r,\textbf{P},t)
\end{align}
where $g(v_r,\textbf{P},t)$ is an odd function with respect to $v_r$, and 
\begin{align}
&C(\textbf{P}(r,\theta,\phi),t)=\nonumber\\
&C_X(r\sin(\theta)\cos(\phi),t)C_Y(r\sin(\theta)\sin(\phi),t)C_Z(r\cos(\theta),t)
\end{align}
The remainder of the proof, reduces to the one dimensional membrane boundary condition considered in \cite{Koszto2001} but with $v_r$ replaced by $v_x$. Denoting the concentrations at radius $r\leq r_m$ and $r>r_m$, by $C_1(\textbf{P},t)$ and $C_2(\textbf{P},t)$, respectively, the boundary conditions 
\begin{align}
&D_1\frac{\partial C_1(\textbf{P},t)}{\partial r}=D_2\frac{\partial C_2(\textbf{P},t)}{\partial r},\quad r=r_m,\\
&D_1\frac{\partial C_1(\textbf{P},t)}{\partial r}=\frac{z}{1-z}\sqrt {\frac{k_B \mathfrak{T}}{2\pi m}}(C_2(\textbf{P},t)-C_1(\textbf{P},t)), r=r_m,
\end{align}
are obtained. Because of spatial symmetry, we represent the concentration at a point can be full characterized by specifying coordinate $r$. Hence, we have:
\begin{align}
&D_1\frac{\partial C_1(r,t)}{\partial r}=D_2\frac{\partial C_2(r,t)}{\partial r},\quad r=r_m,\\
&D_1\frac{\partial C_1(r,t)}{\partial r}=\frac{z}{1-z}\sqrt {\frac{k_B \mathfrak{T}}{2\pi m}}(C_2(r,t)-C_1(r,t)), r=r_m.
\end{align}
This completes the proof.  
\end{appendix}
\end{document}